\documentclass[preprint,12pt]{article}
\usepackage{amsfonts}
\usepackage{bbm}
\usepackage{amsmath,booktabs,ctable,threeparttable}
\usepackage{amssymb,amsfonts,boxedminipage}
\usepackage{amsthm}
\usepackage{algorithm}
\usepackage{algorithmic}
\usepackage{epsfig,graphicx,picins,picinpar,subfigure,epstopdf}
\usepackage{multirow}
\usepackage{verbatim}
\usepackage{bm}
\usepackage[round,numbers,authoryear]{natbib}
\usepackage[colorlinks,
            linkcolor=red,
            anchorcolor=blue,
            citecolor=green
            ]{hyperref}

\numberwithin{equation}{section}

\newtheorem{thm}{Theorem}[section]

\theoremstyle{definition}
\newtheorem{defn}[thm]{Definition}

\newtheorem{rem}[thm]{Remark}

\makeatletter
 \newcommand{\norm}[1]{\left\Vert#1\right\Vert}
 \newcommand{\abs}[1]{\left\vert#1\right\vert}
 \newcommand{\set}[1]{\left\{#1\right\}}

\newcommand{\E}[1]{\mathbb{E}[#1]}

\newcommand{\R}[1]{\mathbb{R}#1}

\newcommand{\var}[1]{\text{Var}(#1)}

\newcommand{\bvar}[1]{\text{Var}\bigg(#1\bigg)}

\newcommand{\corr}[1]{\text{Corr}(#1)}
\newcommand{\Unif}[1]{\mathbb{U}(#1)}

\newcommand{\floor}[1]{\lfloor#1\rfloor}
\newcommand{\ceil}[1]{\lceil #1 \rceil}
\newcommand{\real}{\mathbb{R}}
\newcommand{\bsx}{\boldsymbol{x}}

\newcommand{\Rmnum}[1]{\expandafter\@slowromancap\romannumeral #1@}

\newcommand{\ce}{\mathcal{E}}
\newcommand{\ci}{\mathcal{I}}
\newcommand{\ck}{\mathcal{K}}
\newcommand{\cp}{\mathcal{P}}
\newcommand{\cx}{\mathcal{X}}

\newcommand{\bsa}{\boldsymbol{a}}
\newcommand{\simiid}{\stackrel{\mathrm{iid}}\sim}
\newcommand{\mrd}{\,\mathrm{d}}
\newcommand{\tint}{\mathcal{T}_{\mathrm{int}}}
\newcommand{\tbdy}{\mathcal{T}_{\mathrm{bdy}}}

\makeatother

\begin{document}
\title{Extensible grids: uniform sampling on a space-filling curve}
\author{Zhijian He\\Tsinghua University
\and
Art B. Owen\\Stanford University}
\date{June 2014}
\maketitle
\begin{abstract}
We study the properties of points in $[0,1]^d$ generated
by applying Hilbert's space-filling curve to uniformly
distributed points in $[0,1]$.
For deterministic sampling we obtain a discrepancy of $O(n^{-1/d})$
for $d\ge2$.
For random stratified sampling, and scrambled van der Corput
points, we get a mean squared error of $O(n^{-1-2/d})$ for
integration of  Lipshitz continuous integrands, when $d\ge3$. 
These rates are the same as one gets by sampling on $d$ dimensional grids and
they show a deterioration with increasing $d$.  The rate for
Lipshitz functions is however best possible at that level of smoothness
and is better than plain IID sampling.
Unlike grids, space-filling curve sampling provides points
at any desired sample size, and the van der
Corput version is extensible in $n$.  Additionally we show
that certain discontinuous functions with infinite variation
in the sense of Hardy and Krause can be integrated with
a mean squared error of $O(n^{-1-1/d})$. It was previously
known only that the rate was $o(n^{-1})$.
Other space-filling curves, such as those 
due to Sierpinski and Peano, also attain these rates, while
upper bounds for the Lebesgue curve are somewhat worse,
as if the dimension were $\log_2(3)$ times as high.

\smallskip

\noindent \textbf{Keywords:} 
Hilbert space-filling curve, 
Lattice sequence, 
van der Corput sequence, 
randomized quasi-Monte Carlo.
sequential quasi-Monte Carlo.
\end{abstract}
\section{Introduction}
A Hilbert curve is a continuous mapping
$H(x)$ from $[0,1]$ to $[0,1]^d$ for $d>1$. It is an
example of a class of space-filling curves,
of which Peano's was first.  
Space-filling curves have long been mathematically intriguing,
but since the 1980s (see \cite{bader2013space})
they have become important
computational tools in computer graphics, in finding near optimal
solutions to the travelling salesman problem, and in PDE solvers
where elements in a multidimensional mesh must be allocated
to a smaller number of processors \citep{zumbusch2003parallel}.
In this paper we look at a quasi-Monte Carlo method that
takes equidistributed points $x_i\in[0,1]$ and
then uses $P_i = H(x_i)\in[0,1]^d$.
The analysis also provides convergence rates for some
functions that are not smooth enough to benefit 
from unrandomized quasi-Monte Carlo sampling.

Our interest in this problem was sparked by
\cite{gerb:chop:2014}, who present
an innovative combination of sequential
Monte Carlo (SMC), quasi-Monte Carlo (QMC),
and Markov chain Monte Carlo (MCMC) as a
method to compete with particle MCMC.
The resulting method is closely related
to the array-RQMC algorithm of
\cite{lecu:leco:tuff:2006}.

The particle algorithms simulate $N$ copies of a Markov
chain through a sequence of time steps
$t=1,\dots,T$. At the end of time step $t$,
chain $n$ is in position  $\bsx_{nt}\in\real^d$, $n=1,\dots,N$.
The computation to advance a chain from $\bsx_{nt}$
to $\bsx_{n,t+1}$ requires a point in $[0,1]^S$,
which may either be uniformly distributed, in Monte Carlo,
or from a low discrepancy ensemble, in quasi-Monte Carlo.
It is possible to  advance all $N$ chains by one time step
using a matrix $U^{(t+1)}\in [0,1]^{N\times (S+1)}$.
The first column of $U^{(t+1)}$ is used to
identify which row of $U^{(t+1)}$ will be
used to advance each of the points $\bsx_{nt}$,
and then the remaining $S$ columns are used to advance
the $N$ chains.
When $d=1$ we can sort $x_{nt}$ into the same order as the
first column of $U^{(t+1)}$.  Then if $U^{(t+1)}$ is a low
discrepancy point set, the starting positions $x_{nt}$ are
equidistributed with respect to the updating variables.

Things become much more difficult when
$\bsx_{nt}\in\real^d$ for $d\ge 2$.
Then it is not straightforward how one should align $\bsx_{nt}$ with
the first column or first several columns of $U^{(t+1)}$.
\cite{gerb:chop:2014}  place a space-filling curve in $\real^d$.
Each point $\bsx_{nt}\in\real^d$ has a coordinate on this
curve, its pre-image in $[0,1]$.  Then $\bsx_{1t}$ to $\bsx_{Nt}$
are sorted in increasing order of those pre-images,
and the $k$'th largest one is aligned with the row of $U^{(t+1)}$
having the $k$'th largest value in column $1$.

They give conditions under which their algorithm estimates 
expectations with a root mean squared error of $o(n^{-1})$. 
Their sequential Monte Carlo scheme
has a provably better rate of convergence than Monte Carlo
or Markov chain Monte Carlo. Array-RQMC 
behaves empirically as if it has a better rate \citep{lecu:leco:tuff:2006}
but as yet there is no proof. In principal one could simulate the
chains through $T$ steps without any remapping by using a 
quasi-Monte Carlo scheme in $[0,1]^{ST}$.  But in such high dimensions
it becomes difficult to construct point sets with meaningfully better
equidistribution than Latin hypercube samples
\citep{mcka:beck:cono:1979} have.

In this paper we examine the simpler related problem 
taking either a QMC or randomized QMC sample within $[0,1]$, 
and applying the Hilbert curve to that sample in order to 
get a quadrature rule in $[0,1]^d$. 
We study the accuracy of quadrature. This strategy has
been used in computer graphics for related purposes.
\cite{rafajlowicz2008equidistributed} applied a two dimensional
space-filling curve to a Kronecker sequence in $[0,1]$ in order
to downsample an image. They report that this strategy allows
them to approximate the Fourier spectra of the images.
\cite{schretter2013direct} report that they can downsample
images with fewer visual artifacts this way than by using a two
dimensional QMC sequence.

Section \ref{hilbert} introduces the Hilbert curve, giving
its important properties.  Section \ref{discrepancy} studies the 
star-discrepancy of the resulting points obtained
as the $d$ dimensional image of one-dimensional low discrepancy points.
We find that the star-discrepancy is $O(n^{-1/d})$, which is very high
considering that quasi-Monte Carlo
rules typically attain the $O(n^{-1+\epsilon})$ rate, where $\epsilon>0$
hides logarithmic factors.
Section \ref{randomization} considers some randomized quasi-Monte
Carlo (RQMC) versions of Hilbert sampling.
The mean squared error converges as $O(n^{-1-2/d})$
for Lipshitz continuous integrands,
and as $O(n^{-1-1/d})$ for certain discontinuous integrands
of infinite variation, studied there. Thus we see a better
than Monte Carlo convergence rate, 
though one that deteriorates with increasing dimension. As a result
we expect the much more complicated proposal of \cite{gerb:chop:2014}  
to have diminishing effectiveness with increasing dimension.
Section \ref{experiment} presents numerical results showing a
close match between mean squared error rates in our theorems and 
observed errors in some example functions. That is, the asymptote
appears to be relevant at small sample sizes.
Section \ref{conclusion} compares the results here to those of other 
methods one might use. 
We find that Hilbert curve quadrature
commonly gives the same convergence rates that one would
see from using grids of $n=m^d$ points in $[0,1]^d$, but makes
those rates available at all integer sizes~$n$. At low smoothness
levels (Lipshitz continuity only) that  poor rate is in fact best possible.

\section{Hilbert Curves}\label{hilbert}

Here we introduce Hilbert's space-filling
curve and some of its properties that we need. 
For more background, there is the monograph
\cite{saga:1994} on space-filling curves, of which Chapter
2 describes Hilbert's curve. \cite{zumbusch2003parallel}
describes multilevel numerical methods, including Chapter 4
on space-filling curves.

Throughout this paper, $d$ is a positive integer, $\lambda_d$ is $d$-dimensional Lebesgue measure, and $\norm{\cdot}$ is the usual Euclidean norm.  For integer $m\ge0$, define $2^{dm}$ intervals
$$I_d^m(k)=\bigg[\frac{k}{2^{dm}},\frac{k+1}{2^{dm}}\bigg],\quad k=0,\dots,2^{dm}-1,$$
and let $\ci_d^m=\set{I_d^m(k)\mid k<2^{dm}}$.
Next, for $\kappa = (k_1,\dots,k_d)$
with $k_j\in\{0,1,\dots,2^m-1\}$
define $2^{dm}$ subcubes of $[0,1]^d$ via
\begin{equation}\label{ElementaryInterval}
  E_d^m(\kappa)= \prod_{j=1}^d\bigg[\frac{k_j}{2^{m}},\frac{k_j+1}{2^{m}}\bigg].
\end{equation}
The set of indices $\kappa$ is $\ck_d^m=\{0,1,\dots,2^m-1\}^d$
and we let $\ce_d^m = \{E_d^m(\kappa)\mid\kappa\in \ck_d^m\}$.
One can find a sequence of mappings $H_m:\ci_d^m\to \ce_d^m$ with the following properties,
\begin{itemize}
  \item Bijection: For $k\neq k'$, $H_m(I_d^m(k))\neq H_m(I_d^m(k'))$.
  \item Adjacency: The two subcubes $H_m(I_d^m(k))$ and $H_m(I_d^m(k+1))$ are adjacent. That is, they have one $(d-1)$-dimensional face in common.
  \item Nesting: If we split $I_d^m(k)$ into the $2^d$ successive subintervals $I_d^{m+1}(k_\ell),\ k_\ell=2^dk+\ell,\  \ell=0,\dots,2^d-1$, then the $H_{m+1}(I_d^{m+1}(k_\ell))$ are subcubes whose union is $H_m(I_d^m(k))$.
\end{itemize}
Figure \ref{CurveShown} illustrates the Hilbert curve construction in dimension $2$.
\begin{figure}
  \centering
  \includegraphics[width=\hsize]{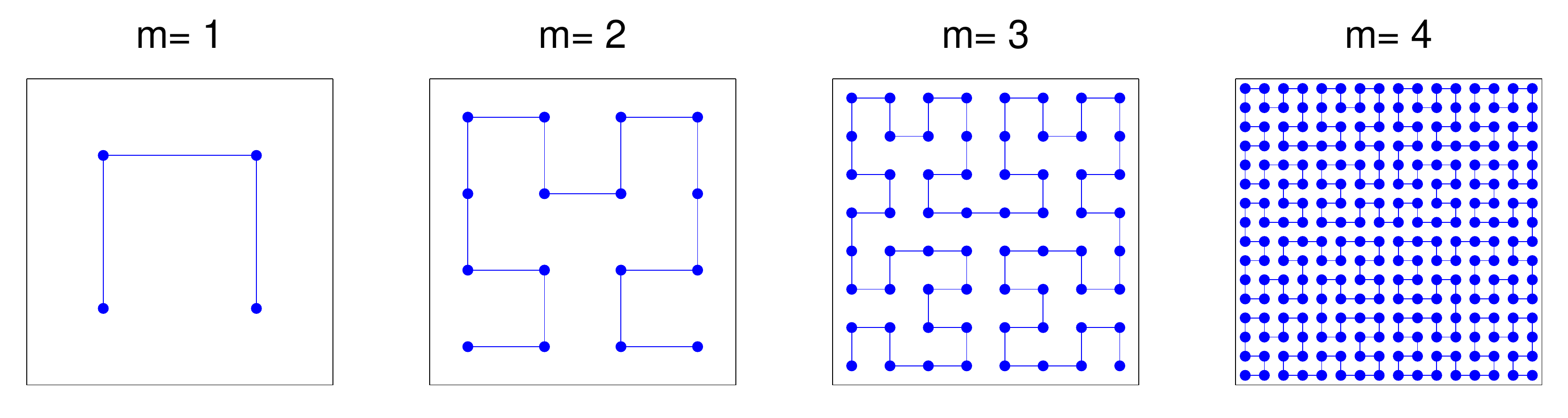}
  \caption{First $4$ stages in the approximation of Hilbert's space-filling curve
}\label{CurveShown}
\end{figure}

The Hilbert curve is defined by $H(x)=\lim_{m\to\infty}H_m(x)$.
The point $x\in[0,1]$ belongs to an infinite sequence
$I_d^m(k_m)$ of intervals which shrink to $x$.
If $x$ does not have a terminating base $2$ representation
then the sequence $I_d^m(k_m)$ is unique and then $H_m(I_d^m(k_m))$
is a unique sequence of subcubes. Points such as
$x=1/4 = 0.01\overline{0}=0.00\overline{1}$ with two 
binary representations nevertheless have 
uniquely defined $H(x)$.
The Hilbert curve passes through every point in $[0,1]^d$.
It is not surjective: there are points $x\ne x'$ with
$H(x)=H(x')$. Indeed, a result of \cite{netto1879beitrag}
shows that no space-filling curve from $[0,1]$
to $[0,1]^d$ for $d>1$ can be bijective.

There is more than one way to define the sequence of
mappings in a Hilbert curve.  But any of those ways
produces a mapping $H$ with these properties:
\begin{itemize}
  \item P(1): $H(I_d^m(k))=H_m(I_d^m(k))$.
  \item P(2): If $A\subset [0,1]$ is measurable, then $\lambda_1(A)=\lambda_d(H(A))$.
  \item P(3): If $x\sim \Unif{[0,1]}$, then $H(x)\sim \Unif{[0,1]^d}$. It admits the change of variables:
      \begin{equation}\label{changeofvariables}
        \mu =\int_{[0,1]^d}f(x)\mrd x = \int_0^1f(H(x))\mrd x.
      \end{equation}
  \item P(4): The function $H(x)$ is H\"older continuous, but nowhere differentiable. More precisely, for any $x,y\in [0,1]$, we have
      \begin{equation}\label{lipschitz}
        \norm{H(x)-H(y)}\leq 2\sqrt{d+3}\abs{x-y}^{1/d},
      \end{equation}
\end{itemize}

The H\"older property P(4) is proved in \cite{zumbusch2003parallel}.
We prove it here too, because the proof is short
and we make extensive use of that result.

\begin{thm}
If $x,y\in[0,1]$ and $H$ is Hilbert's space-filling
curve in dimension $d\ge1$, then
        $\norm{H(x)-H(y)}\leq 2\sqrt{d+3}\abs{x-y}^{1/d}$.
\end{thm}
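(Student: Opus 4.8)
The plan is to exploit the nesting structure of the Hilbert curve directly, reducing the H\"older bound to a statement about how far apart two points can be when their preimages lie in neighboring dyadic intervals. The key geometric observation is property P(1): at level $m$, the curve maps each dyadic interval $I_d^m(k)$ onto a subcube $E_d^m(\kappa)$ of side length $2^{-m}$, and by the adjacency property consecutive subcubes share a face. So first I would fix $x,y\in[0,1]$ with $x\neq y$ and choose the resolution level $m$ to match the separation $|x-y|$; specifically, let $m$ be the largest integer with $2^{-dm}\ge|x-y|$, equivalently $2^{-dm}\ge |x-y| > 2^{-d(m+1)}$. This choice ensures that $x$ and $y$ fall into dyadic intervals $I_d^m(k_x)$ and $I_d^m(k_y)$ at level $m$ that are either identical or at most a bounded number of indices apart.

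The main step is to bound how many level-$m$ subcubes separate $H(x)$ from $H(y)$ along the curve. Since $|x-y|\le 2^{-dm}$, the indices $k_x$ and $k_y$ differ by at most one, so $H(x)$ and $H(y)$ lie in $I_d^m(k_x)$ and $I_d^m(k_y)$ with $|k_x-k_y|\le 1$. By adjacency these two subcubes are either equal or share a $(d-1)$-dimensional face, hence their union has diameter at most the diameter of two stacked unit-side-$2^{-m}$ subcubes. A single subcube has side $2^{-m}$ and diameter $\sqrt{d}\,2^{-m}$; two adjacent subcubes fit inside a box of dimensions $2\cdot 2^{-m}$ in one coordinate and $2^{-m}$ in the other $d-1$ coordinates, giving diameter $\sqrt{(2\cdot 2^{-m})^2 + (d-1)(2^{-m})^2} = 2^{-m}\sqrt{d+3}$. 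Therefore $\norm{H(x)-H(y)}\le 2^{-m}\sqrt{d+3}$.

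It remains to convert the dyadic bound $2^{-m}$ back into a power of $|x-y|$. From the defining inequality $|x-y| > 2^{-d(m+1)}$ I get $2^{-(m+1)} < |x-y|^{1/d}$, so $2^{-m} < 2\,|x-y|^{1/d}$. Substituting into the diameter estimate yields
\begin{equation}
\norm{H(x)-H(y)} \le \sqrt{d+3}\,\cdot 2^{-m} < 2\sqrt{d+3}\,\abs{x-y}^{1/d},
\end{equation}
which is exactly the claimed bound. The edge case $x=y$ is trivial, and the boundary case where one point sits at a shared face between two subcubes is harmless because $H$ is well defined there by continuity.

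The step I expect to be the main obstacle is verifying rigorously that $|x-y|\le 2^{-dm}$ forces $|k_x-k_y|\le 1$. Two points separated by at most the length $2^{-dm}$ of a single level-$m$ interval can straddle at most one interval boundary, so their interval indices differ by at most one; this is clean when neither point is dyadic, but one must handle points with two binary representations (such as $x=1/4$) carefully, using the fact stated in the excerpt that $H$ is nonetheless uniquely defined at such points. I would dispatch this by noting that the bound is continuous in $x$ and $y$ and the dyadic rationals are a measure-zero set, or more directly by always assigning such a point to the interval consistent with its separation from the other point, which does not affect the diameter estimate. The constant $\sqrt{d+3}$ emerges naturally from the two-subcube diagonal, and the factor $2$ comes entirely from the one-level slack in the choice of $m$.
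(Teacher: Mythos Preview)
Your proposal is correct and follows essentially the same argument as the paper: choose $m$ so that $2^{-dm}\ge |x-y|>2^{-d(m+1)}$, observe that $x$ and $y$ lie in at most two consecutive level-$m$ intervals whose images are adjacent subcubes of combined diameter $2^{-m}\sqrt{d+3}$, and then convert $2^{-m}$ back to $2|x-y|^{1/d}$. Your explicit computation of the two-subcube diameter and your discussion of the dyadic edge case add detail the paper omits, but the route is identical.
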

\begin{proof}
Without loss of generality, $x<y$. Let $m=\floor{-\log_2\abs{x-y}/d}$ so that $2^{-dm}\geq \abs{x-y} >2^{-d(m+1)}$. 
The interval $[x,y]$ is contained within one, or at most two, consecutive intervals $I_d^m(k')$, $I_d^m(k'+1)$ for some $k'<2^{dm}-1$.
As a result, the image $H([x,y])$ lies within $H(I_d^m(k'))\cup H(I_d^m(k'+1))$. By P(1) and the adjacency property of $H_m$, the diameter of $H([x,y])$ is bounded by the diameter of two adjacent subcubes of the form (\ref{ElementaryInterval}), which is 
$2^{-m}\sqrt{d+3}\le2\sqrt{d+3}\abs{x-y}^{1/d}$.
\end{proof}

In the context of numerical integration, the integral $\mu$ in (\ref{changeofvariables}) can be estimated by the following average:
\begin{equation}\label{estimator}
  \hat{\mu}=\frac{1}{n}\sum_{i=1}^nf(H(x_i)),
\end{equation}
where $x_i$'s are carefully chosen quadrature points in $[0,1]$. The space-filling curve reduces a multidimensional integral to a one-dimensional numerical integration problem. It is important to point out that the integrand $f\circ H(x)$ is not of bounded variation even for smooth (but non-trivial) functions $f$.
Bounded variation would have yielded convergence rates of $|\hat\mu-\mu|=O(1/n)$
in any dimension via the Koksma-Hlawka inequality (see Section \ref{discrepancy}).

\section{Star-discrepancy}\label{discrepancy}

Given a sequence $x_1,\dots,x_n$ in $[0,1]$, we can obtain a corresponding sequence $P_1,\dots,P_n$ in $[0,1]^d$ by the 
Hilbert mapping function described above, $P_i=H(x_i)$.

We use the star-discrepancy to measure the uniformity of the resulting sequence $\cp = (P_1,\dots,P_n)$. 
For $\bsa=(a_1,\dots,a_d)\in[0,1]^d$,
let $S=\prod_{i=1}^d [0,a_i)$ be the anchored box $[0,\bsa)$,
and let $A(\cp,S)$ denote the number of points $P_i$ in $S$. 
The signed discrepancy of $\cp$ at $S$ is
$$
\delta(S)=\delta(S;\cp) = 
\frac{A(\mathcal{P},S)}{n}-\lambda_d(S)
$$
and the star-discrepancy of $\cp$ is 
\begin{equation}\label{star-discrepancy}
  D^*_n(\mathcal{P})=
\sup_{\bsa\in [0,1)^d}\abs{ \delta([0,\bsa);\cp)}.
\end{equation}
The significance of the star discrepancy comes from the Koksma-Hlawka
inequality: 
\begin{align}\label{koksmahlawka}
|\hat\mu-\mu|\le D_n^*(\cp)V_{\mathrm{HK}}(f)
\end{align}
where $V_{\mathrm{HK}}(f)$ is the total variation of $f$
in the sense of Hardy and Krause \citep{nied:1992}.

We can trivially get a small $D^*_n(\cp)$ by taking
$x_i$ to be the preimage under $H$ of a low discrepancy
point set in $[0,1]^d$. The Hilbert curve clearly adds
no value for such a construction.
For practical purposes we consider
only $x_i$ generated as low discrepancy points in $[0,1]$.

One such construction is the lattice,
\begin{equation}\label{equallyspacedseq}
  x_i=\frac{i-1}{n}.
\end{equation}
The lattice~\eqref{equallyspacedseq} has star discrepancy
$1/n$ and the lowest possible star discrepancy \citep{nied:1992}
for $n$ points in $[0,1]$ is $1/(2n)$
attained via $x_i = (i-1/2)/n$ for $i=1,\dots,n$.

Another such construction is the van der Corput sequence \citep{corput1935}. In van der Corput sampling of $[0,1)$, the integer $i\geq 0$ is written in integer base $b\geq 2$ as $i=\sum_{k= 1}^\infty d_kb^{k-1}$ for $d_k=d_k(i)\in\set{0,1,\dots,b-1}$. Then $i$ is mapped to
\begin{equation}\label{vandercorput}
  x_i =\sum_{k= 1}^\infty d_kb^{-k}.
\end{equation}
The star-discrepancy of the van der Corput sequence is $O(n^{-1}\log(n))$. The van der Corput sequence can be extended one point at a time, while the lattice sequence is not extensible except by doubling the sample size. Figures \ref{latticemapping} and \ref{vandercorputmapping} show the Hilbert mappings from lattice sequence and van der Corput sequence, respectively. For $n=b^m$, the van der Corput sequence in base $b$ is a permutation of the lattice sequence.

In Theorem~\ref{ThmDiscrepancy}
we bound the star-discrepancy of 
stratified $x_i$ like \eqref{equallyspacedseq}.
Figure \ref{partition} shows some of these strata for small $n$.

\begin{figure}[t!]
  \centering
\includegraphics[width=\hsize]{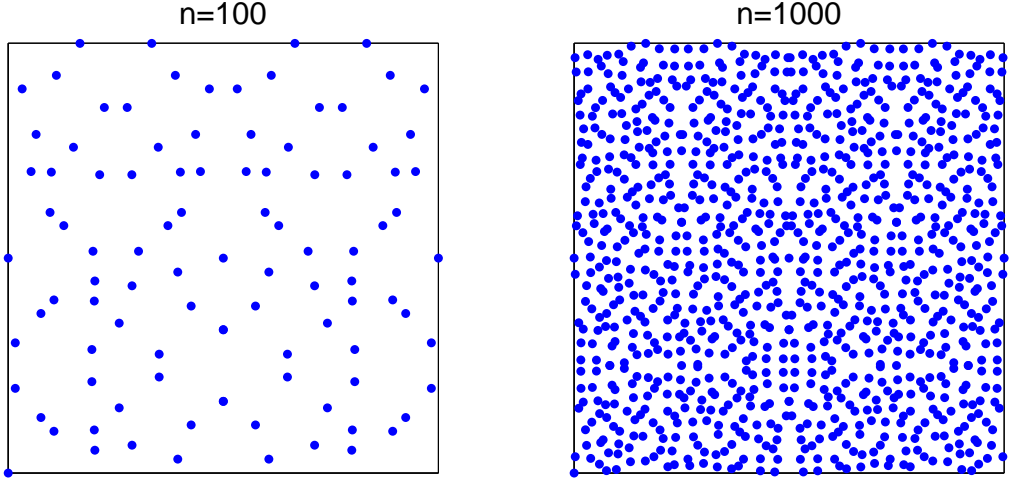}
\caption{Hilbert mappings
of $(i-1)/n$ to $[0,1]^2$ for $i=1,\dots,n$
where $n\in\{100,1000\}$.}\label{latticemapping}
\end{figure}
\begin{figure}
  \centering
\includegraphics[width=\hsize]{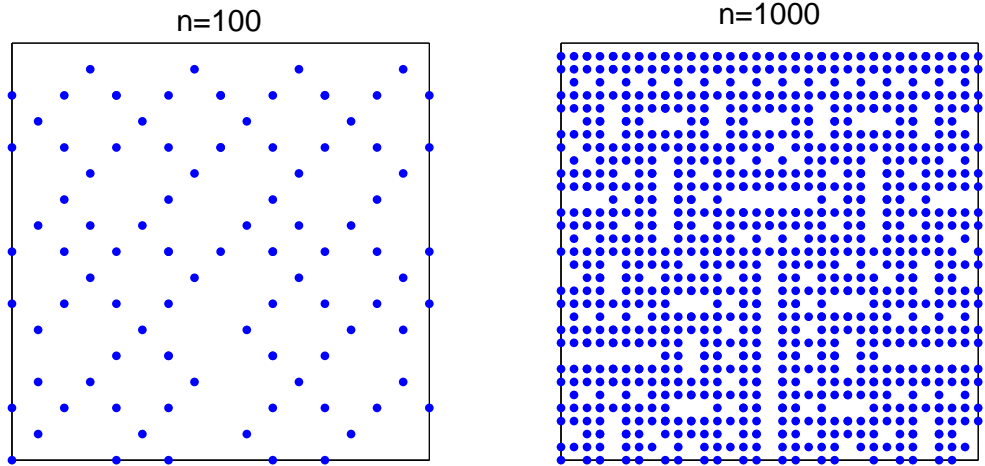}
\caption{Hilbert mappings of the first $n$ van der Corput points
in base $2$ to $[0,1]^2$, for $n\in\{100,1000\}$.
}\label{vandercorputmapping}
\end{figure}
\begin{figure}
  \centering
\includegraphics[width=\hsize]{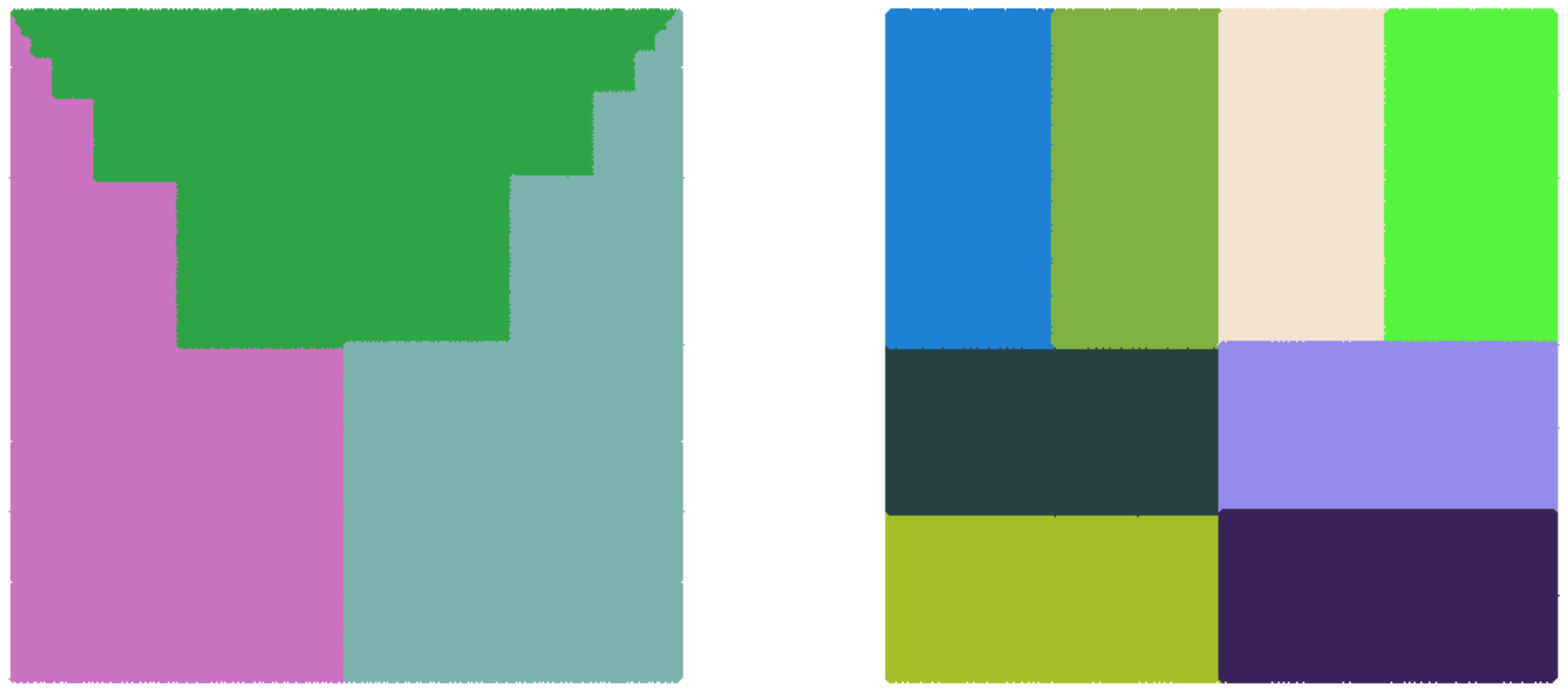}
\caption{Uniform partitions of $[0,1]^2$ by the mapping $H$ for $n=3, 8$.}
\label{partition}
\end{figure}

\begin{thm}\label{ThmDiscrepancy}
Let $x_1,\dots,x_n\in[0,1]$
and let $\mathcal{P}=(P_1,\dots,P_n)$ where $P_i=H(x_i)\in[0,1]^d$.
If each interval $I_k = [(k-1)/n,k/n)$, for $k=1,\dots,n$ contains precisely
one of the $x_i$, then
\begin{equation}\label{equalsq}
  D^*_n(\mathcal{P})\leq 4d\sqrt{d+3}n^{-1/d}+O(n^{-2/d}).
\end{equation}
\end{thm}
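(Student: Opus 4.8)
The plan is to bypass any dyadic discretization and work directly with the $n$ strata $I_k=[(k-1)/n,k/n)$, combining the measure preservation P(2)--P(3) with the H\"older bound \eqref{lipschitz}; it is the latter that will supply the factor $\sqrt{d+3}$. Fix an anchor $\bsa$ and set $S=[0,\bsa)$ and $B=H^{-1}(S)\subseteq[0,1]$. By P(2), $\lambda_1(B)=\lambda_d(S)$, and since $H(x_i)\in S$ exactly when $x_i\in B$ the count $A(\cp,S)$ is just the number of $x_i$ in $B$. Thus $\delta(S)$ is precisely the one-dimensional signed discrepancy of the stratified points at the (typically disconnected) set $B$. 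Writing $z_k$ for the unique sample point in $I_k$, I would expand $\delta(S)=\frac1n\sum_{k=1}^n\bigl(\mathbbm{1}[z_k\in B]-n\lambda_1(I_k\cap B)\bigr)$ and observe that every stratum with $I_k\subseteq B$ or $I_k\cap B=\emptyset$ contributes exactly $0$, while each remaining stratum contributes at most $1/n$ in modulus. Hence $\abs{\delta(S)}\le N_{\mathrm{cut}}/n$, where $N_{\mathrm{cut}}$ counts the \emph{cut} strata, those $I_k$ meeting both $B$ and its complement.

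Next I would localize the cut strata. Since $H$ is continuous we have $\partial B\subseteq H^{-1}(\partial S)$, so each cut $I_k$ has connected image $H(I_k)$ meeting both $S$ and $[0,1]^d\setminus S$; because $S$ is the box $\prod_j[0,a_j)$, exiting it forces some coordinate to pass through the value $a_j$, so $H(I_k)$ contains a point on an active hyperplane $\{x_j=a_j\}$ (the faces at $x_j=0$ never separate $S$ from its complement). By \eqref{lipschitz}, $\mathrm{diam}\,H(I_k)\le 2\sqrt{d+3}\,n^{-1/d}=:\delta_0$, so the whole image sits within $\delta_0$ of that hyperplane. Consequently $\bigcup_{\mathrm{cut}}H(I_k)\subseteq T:=\bigcup_{j=1}^d\bigl(\{\abs{x_j-a_j}\le\delta_0\}\cap[0,1]^d\bigr)$.

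The volume of $T$ I would control by the surface-area-times-thickness heuristic: each slab $\{\abs{x_j-a_j}\le\delta_0\}\cap[0,1]^d$ has volume at most $2\delta_0$, so $\lambda_d(T)\le 2d\delta_0$. To convert the count $N_{\mathrm{cut}}$ into a volume I would use measure preservation a second time, in the form of an overlap-free accounting: for disjoint intervals $I,I'$, P(2) gives $\lambda_d(H(I))+\lambda_d(H(I'))=\lambda_1(I)+\lambda_1(I')=\lambda_d(H(I)\cup H(I'))$, which forces $\lambda_d(H(I)\cap H(I'))=0$. Thus the images of distinct strata overlap only in a null set, and $N_{\mathrm{cut}}/n=\sum_{\mathrm{cut}}\lambda_d(H(I_k))=\lambda_d\bigl(\bigcup_{\mathrm{cut}}H(I_k)\bigr)\le\lambda_d(T)\le 2d\delta_0$. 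Therefore $\abs{\delta(S)}\le 2d\delta_0=4d\sqrt{d+3}\,n^{-1/d}$, uniformly in $\bsa$, and any refinement using the sharper two-sided face areas $\prod_{i\ne j}a_i$ together with edge and cube-truncation corrections only contributes at order $\delta_0^2=O(n^{-2/d})$, matching \eqref{equalsq}.

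The step I expect to need the most care is the passage from ``cut stratum'' to ``image within a slab'': I must verify that a genuinely cut stratum really crosses some active hyperplane (rather than merely grazing $\partial S$ near the cube's own boundary), which rests on the connectedness of $H(I_k)$ and an intermediate-value argument coordinatewise, and that the overlap-free summation is legitimate even though the images $H(I_k)$ share $(d-1)$-dimensional faces, which is exactly what the null-overlap consequence of P(2) provides. Once those two measure-theoretic points are pinned down, the constant $4d\sqrt{d+3}$ falls out as (number of faces $d$) $\times$ (thickness $2\delta_0$) $\times$ (the H\"older diameter constant), with no further room to optimize at this level of smoothness.
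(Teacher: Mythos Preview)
Your argument is correct and shares the paper's skeleton: decompose the signed discrepancy over the $n$ strata, note that only strata whose Hilbert image meets both $S$ and $S^c$ contribute (each at most $1/n$), and bound their number by enclosing their images in a small-volume set via the H\"older diameter bound and measure preservation. The paper works directly in $[0,1]^d$ with the images $E_k=H(I_k)$ and encloses the boundary $E_k$ in the annular region $S_+\setminus S_-$ where $S_\pm=\prod_j[0,a_j\pm\varepsilon)$; the product expansion of that volume is what generates the $O(n^{-2/d})$ term in \eqref{equalsq}. You instead use an intermediate-value step to place each cut image inside a single coordinate slab $\{|x_j-a_j|\le\delta_0\}$ and bound the union of slabs by $2d\delta_0$ via subadditivity. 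This slab variant is marginally sharper: it gives $D^*_n(\cp)\le 4d\sqrt{d+3}\,n^{-1/d}$ outright, with no lower-order correction, so the $O(n^{-2/d})$ is automatically absorbed and your closing remark about refinements is unnecessary. Your pull-back to $B=H^{-1}(S)\subseteq[0,1]$ is a cosmetic reframing, since the ``cut'' $I_k$ coincide exactly with the paper's ``boundary'' $E_k$; and your care about the images $H(I_k)$ overlapping only on null sets is a point the paper glosses over when it calls the $E_k$ ``disjoint.''
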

\begin{proof}
Choose any $\bsa\in[0,1]^d$ and let $S=[0,\bsa)$.
Next, define $E_k = H(I_k)$ for $k=1,\dots,n$ and
adjoin $E_{n+1} = H( \{1\})$. By additivity of
signed discrepancy,
$$\delta(S;\cp) 
= \frac1n\sum_{k=1}^{n+1} \delta(S\cap E_k;\cp)
= \frac1n\sum_{k=1}^{n} \delta(S\cap E_k;\cp),$$
because $S\cap E_{n+1}$ has volume $0$ and has no points of $\cp$.
From here on, we restrict attention to $E_k$ for $k=1,\dots,n$.
If $E_k\cap S=\emptyset$, then $\delta(S\cap E_k;\cp)=0$.
By the measure preserving property of $H$, $\lambda_d(E_k)=1/n$,
and so if $E_k\subseteq S$, then $\delta(S\cap E_k;\cp)=0$.
Otherwise $-1/n\le\delta(S\cap E_k;\cp)\le 1/n$.
Let $B$ be the number  of `boundary' $E_k$,
which intersect both $S$ and $S^c=[0,1]^d\setminus{S}$.
Then $|\delta(S;\cp)|\le B/n$, and we turn to bounding $B$.

Let $r_k$ be the diameter of $E_k$.
By (\ref{lipschitz}), $r_k\leq \varepsilon\equiv 2\sqrt{d+3}n^{-1/d}$.
Define $S_+ = \prod_{j=1}^d[0,\min(a_j+\varepsilon,1))$
and $S_- = \prod_{j=1}^d[0,\max(a_j-\varepsilon,0))$.
If $E_k$ intersects $S$ and $S^c$, then $E_k\subset S_+\setminus{S_-}$.
Because the $E_k$ are disjoint with volume $1/n$,
$$B\le 
n\lambda_d(S_+\setminus{S_-})
\le 2n\biggl(\,\prod_{j=1}^d(a_j+\varepsilon)-\prod_{j=1}^da_j\biggr)
\le 2n(d\varepsilon+O(\varepsilon^2)).
$$
Thus $|\delta(S;\cp)|\le 2d\varepsilon +O(\varepsilon^2)
=4d\sqrt{d+3}n^{-1/d}+O(n^{-2/d})$, and since $S$ was
any anchored box, the result now follows.
\end{proof}

In Theorem~\ref{ThmDiscrepancyVDC}
we apply Theorem~\ref{ThmDiscrepancy}
to get a bound for star discrepancy
of the van der Corput sequence when $d>1$.
The case $d=1$ is well known and has
star discrepancy $O(\log(n)/n)$.
\begin{thm}\label{ThmDiscrepancyVDC}
For integer base $b\ge 2$ and $n\ge 1$,
let $x_1,\dots,x_n\in[0,1)$
be defined by the van der Corput 
mapping~\eqref{vandercorput}
and let $\mathcal{P}=(P_1,\dots,P_n)$ where $P_i=H(x_i)\in[0,1]^d$.
Then, for $d>1$,
\begin{equation}\label{discvdc}
  D^*_n(\mathcal{P})\leq 
\frac{4(b-1)\sqrt{d+3}}{1-b^{-(d-1)/d}}\,
n^{-1/d}+O(n^{-2/d}\log(n)).
\end{equation}
\end{thm}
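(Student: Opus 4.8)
The plan is to exploit the self-similar (nested) structure of the van der Corput sequence to partition the first $n$ points into blocks, each of which is \emph{perfectly} stratified at some $b$-adic scale, and then apply Theorem~\ref{ThmDiscrepancy} to each block and sum the contributions. The point is that although the first $n$ van der Corput points need not be stratified into $n$ equal cells when $n$ is not a power of $b$, they do split into sub-blocks that are stratified at coarser scales, and that is all Theorem~\ref{ThmDiscrepancy} requires.

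Write $n$ in base $b$ as $n=\sum_{\ell=0}^{L}a_\ell b^\ell$ with $a_\ell\in\{0,\dots,b-1\}$, $a_L\neq0$, and set $N_\ell=\sum_{j>\ell}a_jb^j$. I would split $\{0,1,\dots,n-1\}$ into the blocks
$$
B_{\ell,t}=\{\,N_\ell+tb^\ell,\ \dots,\ N_\ell+(t+1)b^\ell-1\,\},\qquad \ell=0,\dots,L,\ \ t=0,\dots,a_\ell-1,
$$
which are disjoint and exhaust $\{0,\dots,n-1\}$. Since each $N_\ell$ is a multiple of $b^{\ell+1}$, every block begins at a multiple of $b^\ell$. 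The key identity is that, writing $i=cb^\ell+r$ with $0\le r<b^\ell$, the radical inverse~\eqref{vandercorput} factors as $x_i=b^{-\ell}x_c+x_r$. As $r$ runs through $0,\dots,b^\ell-1$ the values $x_r$ run through a permutation of $\{0,1/b^\ell,\dots,(b^\ell-1)/b^\ell\}$, while the fixed offset $b^{-\ell}x_c\in[0,b^{-\ell})$ merely shifts them, so the $b^\ell$ points of $B_{\ell,t}$ fall one apiece into the intervals $[j/b^\ell,(j+1)/b^\ell)$, $j=0,\dots,b^\ell-1$. Thus each block is exactly of the stratified type required by Theorem~\ref{ThmDiscrepancy}.

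Applying Theorem~\ref{ThmDiscrepancy} to the $b^\ell$ points of each $B_{\ell,t}$ (with its $n$ replaced by $b^\ell$) bounds the block's star-discrepancy by $4d\sqrt{d+3}\,b^{-\ell/d}+O(b^{-2\ell/d})$. By additivity of the signed discrepancy over the blocks, for any anchored box $S$,
$$
n\,|\delta(S;\cp)|\le\sum_{\ell=0}^{L}\sum_{t=0}^{a_\ell-1}b^\ell\,D^*_{b^\ell}(\cp_{B_{\ell,t}})
\le\sum_{\ell=0}^{L}a_\ell b^\ell\bigl(4d\sqrt{d+3}\,b^{-\ell/d}+O(b^{-2\ell/d})\bigr).
$$
Using $a_\ell\le b-1$, the leading sum is geometric in $b^{\ell(d-1)/d}$ and is dominated by its top term $\ell=L$; since $b^L\le n$ this gives $O(n^{(d-1)/d})$ with the ratio factor $1/(1-b^{-(d-1)/d})$, and dividing by $n$ produces the claimed $n^{-1/d}$ rate. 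The error sum $\sum_\ell a_\ell b^{\ell(d-2)/d}$ is again geometric and is $O(n^{(d-2)/d})$ when $d>2$, but it telescopes to $O(L)=O(\log n)$ when $d=2$; dividing by $n$ then explains the stated $O(n^{-2/d}\log n)$ remainder for all $d>1$.

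The base-$b$ bookkeeping and the geometric summation are routine. The step deserving the most care is the factorization $x_{cb^\ell+r}=b^{-\ell}x_c+x_r$ together with the claim that it places \emph{exactly one} point in each subinterval, since that is precisely what lets one invoke Theorem~\ref{ThmDiscrepancy} verbatim on each block. The hard part, I expect, is the constant: a black-box application of Theorem~\ref{ThmDiscrepancy} block by block carries its factor $d$ (from counting the $d$ faces of $S$ at every scale), whereas the target constant $4(b-1)\sqrt{d+3}/(1-b^{-(d-1)/d})$ has none. Recovering it therefore seems to require a sharper boundary-cell count—exploiting that all $a_\ell$ blocks at level $\ell$ share the same image cells $H([j/b^\ell,(j+1)/b^\ell))$—rather than the looser per-block bound.
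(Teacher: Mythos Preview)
Your approach is essentially identical to the paper's own proof: decompose $n$ in base $b$, split the first $n$ van der Corput points into $\sum_j a_j$ blocks of sizes $b^j$, observe that each block is stratified, apply Theorem~\ref{ThmDiscrepancy} to each, and sum via additivity of the signed discrepancy together with a geometric series. Regarding the constant you flag: the paper's proof also applies Theorem~\ref{ThmDiscrepancy} black-box and therefore carries exactly the same factor $d$ (it sets $C=(b-1)4d\sqrt{d+3}$), so the displayed constant in~\eqref{discvdc} appears simply to be missing a $d$; no sharper boundary count is performed there, and your speculated refinement is not needed to match the paper.
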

\begin{proof}
We begin by writing $n=\sum_{j=0}^ka_jb^j$
where $a_j\in\{0,1,\dots,b-1\}$ and $a_k>0$.
The $x_i$ can be partitioned into disjoint
sets $\cx_{j\ell}$ of length $b^j$ for $\ell=1,\dots,a_j$.
Each of these sets satisfies the conditions of 
Theorem~\ref{ThmDiscrepancy}. Let $\cp_{j\ell}$ be
the image of the points in $\cx_{j\ell}$ under $H$.

Now let $S$ be any anchored box $[0,\bsa)\subset[0,1]^d$.
By additivity of local discrepancy over samples,
$n\delta(S;\cp) = \sum_{j=0}^k\sum_{\ell=1}^{a_j}b^j\delta(S;\cp_{j\ell})$.
Therefore
\begin{align*}
n|\delta(S;\cp)|
& \le \sum_{j=0}^k\sum_{\ell=1}^{a_j}
b^j\bigl[ 4d\sqrt{d+3}b^{-j/d}+O(b^{-2j/d})\bigr]\\
& \le C\sum_{j=0}^k
b^j\bigl[b^{-j/d}+O(b^{-2j/d})\bigr]
\end{align*}
for $C=(b-1)4d\sqrt{d+3}$.
Now for $d>1$,
$$\sum_{j=0}^k(b^{1-1/d})^j\le
\sum_{j=-\infty}^k(b^{1-1/d})^j=\frac{(b^{1-1/d})^k}{1-b^{-1+1/d}}
\le\frac{n^{1-1/d}}{1-b^{-1+1/d}}
$$
and $\sum_{j=0}^k(b^{(1-2/d)})^j\le kb^{k(1-2/d)}
=O(n^{1-2/d}\log(n))$.
\end{proof}


Theorems \ref{ThmDiscrepancy} and \ref{ThmDiscrepancyVDC} show that the star-discrepancy is $O(n^{-1/d})$ for the two sequences. 
Thus the estimate (\ref{estimator}) has a worse upper bound than ordinary QMC if the integrand is of bounded variation in the sense of Hardy and Krause. 

\section{Randomization}\label{randomization}
In this section, we study the variance resulting from randomized samples along 
the Hilbert curve. We get convergence rates for Lipschitz continuous functions.

We also study discontinuous functions of the form
$f(x) = g(x)1_\Omega(x)$ where the set $\Omega\subset[0,1]^d$
has a boundary that admits ($d-1$)-dimensional Minkowski content (defined below).
Functions of this type typically have infinite variation
in the sense of Hardy and Krause \citep{variation} unless the set $\Omega$
is an axis parallel box (or finite union of such).
Infinite variation renders the Koksma-Hlawka inequality~\eqref{koksmahlawka}
useless. We do know that if $f\in L^2[0,1]^d$
then $f\circ H\in L^2[0,1]$ and scrambled net quadrature
on $[0,1]$ for $f\circ H$ will have a mean squared error $o(n^{-1})$.
Here we find a rate.

\subsection{Randomized Lattice Sequence}
We randomize the lattice points in (\ref{equallyspacedseq}) by performing a random shift in each subinterval, that is
\begin{equation}\label{randomshift}
  x_i = \frac{i-1+\Delta_i}{n},\quad \text{with}\quad\Delta_i\simiid \Unif{[0,1]},
\quad i=1,\dots,n.
\end{equation}
As a result, $x_i\sim \Unif{I_i}$ independently for $I_i=[\frac{i-1}{n},\frac{i}{n}]$. Let $\Delta=(\Delta_1,\dots,\Delta_n)$. A randomized version of (\ref{estimator}) is given by
\begin{equation}\label{randomEst}
  \hat{\mu}(\Delta) = \frac{1}{n}\sum_{i=1}^nf(H(x_i)).
\end{equation}
First, we need some definitions.

\begin{defn}\label{defnLipschitz}
For a function $f(x)$ defined on $[0,1]^d$, if there exists a constant $M$ such that $$\abs{f(x)-f(y)}\leq M\norm{x-y}$$ for any $x,y\in [0,1]^d$, then $f(x)$ is said to be Lipschitz continuous.
\end{defn}
\begin{defn}\label{defnMinkowski}
For a set $\Omega\subset [0,1]^d$, define
\begin{equation*}
  \mathcal{M}(\partial \Omega) = \lim_{\epsilon\downarrow 0}\frac{\lambda_d((\partial \Omega)_\epsilon)}{2\epsilon},
\end{equation*}
where $ (\partial \Omega)_\epsilon= \set{X\in \R^d\mid\text{dist}(x,\partial \Omega)\leq \epsilon}.$
If $\mathcal{M}(\partial \Omega)$ exists and is finite, then $\partial \Omega$ is said to admit ($d-1$)-dimensional Minkowski content.
\end{defn}
\begin{thm}\label{ThmLatticeVariance}
The estimate $\hat\mu(\Delta)$ from \eqref{randomEst}
is unbiased for any $f\in L^2([0,1]^d)$. If $f$ is Lipschitz continuous, then
\begin{equation}\label{boundofsmoothfun}
  \var{\hat{\mu}(\Delta)}=O(n^{-1-2/d}).
\end{equation}
If $f(x)=g(x)1_{\Omega}(x)$ where $h$ is Lipschitz continuous and $\partial\Omega$ admits ($d-1$)-dimensional Minkowski content, then
\begin{equation}\label{boundofdiscfun}
  \var{\hat{\mu}(\Delta)}=O(n^{-1-1/d}).
\end{equation}
\end{thm}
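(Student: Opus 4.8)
The plan is to treat the two error sources separately, exactly as the variance
decomposition suggests. Because each $x_i\sim\Unif{I_i}$ independently, the
estimator $\hat\mu(\Delta)=\frac1n\sum_i f(H(x_i))$ is an average of independent
terms, so
\[
\var{\hat\mu(\Delta)}=\frac1{n^2}\sum_{i=1}^n \var{f(H(x_i))}.
\]
Unbiasedness is immediate from P(3): each $H(x_i)$ is uniform on the subcube
$E_i=H(I_i)$ of volume $1/n$, and summing the conditional expectations recovers
$\mu$ via the change of variables \eqref{changeofvariables}. The whole problem
therefore reduces to bounding a single stratum's variance $\var{f(H(x_i))}$ and
showing the sum is $O(n^{-1-2/d})$ or $O(n^{-1-1/d})$ respectively.

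For the Lipschitz case, I would use the standard bound
$\var{f(H(x_i))}\le \tfrac14(\mathrm{osc}_i f)^2$, where the oscillation of $f$ on
$E_i$ is controlled by the diameter of $E_i$. Since $x_i$ ranges over an interval
of length $1/n$, the H\"older bound \eqref{lipschitz} gives
$\mathrm{diam}(E_i)\le 2\sqrt{d+3}\,n^{-1/d}$, so by Lipschitz continuity the
oscillation is $O(n^{-1/d})$ and each stratum contributes
$\var{f(H(x_i))}=O(n^{-2/d})$. Summing $n$ such terms and dividing by $n^2$ yields
$\var{\hat\mu(\Delta)}=O(n^{-1-2/d})$, as claimed. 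This part is routine once the
diameter bound is in hand.

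The discontinuous case is the main obstacle and requires the Minkowski content
hypothesis. Here the key observation is that for most strata $E_i$, the indicator
$1_\Omega$ is constant on $E_i$ (because $E_i$ lies entirely inside $\Omega$ or
entirely inside $\Omega^c$), and on those strata $f=g\,1_\Omega$ inherits a
Lipschitz-type oscillation bound of order $n^{-1/d}$, contributing
$O(n^{-2/d})$ to the per-stratum variance as before. The strata that can
contribute an $O(1)$ variance are precisely the "boundary" strata, those $E_i$
meeting both $\Omega$ and $\Omega^c$; for each such stratum I bound
$\var{f(H(x_i))}$ by a constant depending only on $\sup|g|$. The plan is then to
count the boundary strata: since each $E_i$ has diameter at most
$\varepsilon\equiv 2\sqrt{d+3}\,n^{-1/d}$, any boundary $E_i$ lies within the
$\varepsilon$-neighborhood $(\partial\Omega)_\varepsilon$, and disjointness with
volume $1/n$ gives a count of at most
$n\,\lambda_d((\partial\Omega)_\varepsilon)$. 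By the Minkowski content
definition, $\lambda_d((\partial\Omega)_\varepsilon)\le
(2\mathcal{M}(\partial\Omega)+o(1))\varepsilon=O(n^{-1/d})$, so there are
$O(n^{1-1/d})$ boundary strata. Combining, the boundary strata contribute
$\frac1{n^2}\cdot O(n^{1-1/d})\cdot O(1)=O(n^{-1-1/d})$, while the interior strata
contribute $\frac1{n^2}\cdot O(n)\cdot O(n^{-2/d})=O(n^{-1-2/d})$, which is lower
order. The dominant term is therefore $O(n^{-1-1/d})$.

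The delicate point to get right is the boundary-counting step: I must ensure that
the H\"older diameter bound genuinely forces every boundary stratum into
$(\partial\Omega)_\varepsilon$ (this follows because a point of $E_i$ in $\Omega$
and a point in $\Omega^c$ are within distance $\varepsilon$, so any point of $E_i$
is within $\varepsilon$ of $\partial\Omega$ by continuity/connectedness of the
segment, or more safely by noting $E_i$ is a connected subcube crossing the
boundary), and that the Minkowski limit can be converted into a finite-$\varepsilon$
upper bound with a controlled error. I would also note that the statement's "$h$"
should read "$g$," since $f=g\,1_\Omega$; this is a harmless typo. No other step
presents real difficulty.
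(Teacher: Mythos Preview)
Your proposal is correct and follows essentially the same route as the paper: independence of the $x_i$ reduces the variance to a sum of per-stratum variances, the H\"older diameter bound $\mathrm{diam}(E_i)\le 2\sqrt{d+3}\,n^{-1/d}$ controls the Lipschitz case, and for the discontinuous case you split into interior and boundary strata and count the latter via $\bigcup_{i\in\tbdy}E_i\subset(\partial\Omega)_\varepsilon$ together with the Minkowski content hypothesis, exactly as the paper does. The only cosmetic differences are that you invoke Popoviciu's inequality where the paper bounds $|f(H(x_i))-\E{f(H(x_i))}|$ directly, and you add variances of the interior and boundary pieces directly (legitimate, since the index sets are disjoint and the $x_i$ independent) whereas the paper uses the looser bound $\var{\hat\mu}\le(\sqrt{\var{\hat\mu_{\mathrm{int}}}}+\sqrt{\var{\hat\mu_{\mathrm{bdy}}}})^2$.
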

\begin{proof}
Let $E_i=H(I_i)$. Because $x_i\sim \Unif{I_i}$, we have $H(x_i)\sim \Unif{E_i}$.
Moreover $\lambda_d(E_i)=1/n$. Thus 
$\E{\hat{\mu}(\Delta)}$ equals
\begin{equation*}
\frac{1}{n}\sum_{i=1}^n\E{f(H(x_i))}=\frac{1}{n}\sum_{i=1}^n\bigg(n\int_{H(I_i)}f(x)\mrd x\bigg)=\int_{[0,1]^d}f(x)\mrd x=\mu,
\end{equation*}
and so  (\ref{randomEst}) is unbiased.

Let $f$ be a Lipschitz continuous function, and let $M$ be
the constant from Definition \ref{defnLipschitz}.
For any $x,y\in E_i$, we have $\abs{f(x)-f(y)}\leq Mr_i$, where $r_i$ is the diameter of $E_i$. As in the proof of Theorem \ref{ThmDiscrepancy}, 
$r_i\le \varepsilon\equiv 2\sqrt{d+3}n^{-1/d}$, and
so $\abs{f(x)-f(y)}\leq 2M\sqrt{d+3}n^{-1/d}$. 
It follows that 
$$\abs{f(H(x_i))-\E{f(H(x_i))}}\leq M\varepsilon = 2M\sqrt{d+3}n^{-1/d},
\quad i=1,\dots,n.$$
Now, since $H(x_i)$'s are independent,
\begin{align*}
  \var{\hat{\mu}(\Delta)} 
  &\leq \frac1{n^2}\sum_{i=1}^n 4M^2(d+3)n^{-2/d}=4M^2(d+3)n^{-1-2/d}
\end{align*}
establishing~\eqref{boundofsmoothfun}.

Next consider $f(x)=g(x)1_{\Omega}(x)$. Let 
$\tint=\set{1\leq i\leq n\mid E_i\subset \Omega}$ and
$\tbdy=\set{1\leq i\leq n\mid E_i\cap\Omega\neq \emptyset}\backslash\tint$. 
These are, respectively, the collections of $E_i$ that are interior
to $\Omega$, and at the boundary  of $\Omega$.
Then
\begin{equation*}
  \hat{\mu}(\Delta) = \frac{1}{n}\sum_{i\in \tint}g(H(x_i))+
\frac{1}{n}\sum_{i\in \tbdy}g(H(x_i))1_{\Omega}(H(x_i))=
\hat{\mu}_{\mathrm{int}}+\hat{\mu}_{\mathrm{bdy}}.
\end{equation*}
Since $g(x)$ is Lipschitz continuous, $\var{\hat{\mu}_{\mathrm{int}}}=O(n^{-1-2/d})$
by the reasoning above. Also, there exists a constant $D$ 
with $\abs{g(x)}\leq D$ for all $x\in [0,1]^d$. Thus
\begin{align}
  \var{\hat{\mu}_{\mathrm{bdy}}} &= \frac1{n^2}\sum_{i\in \tbdy}\var{g(H(x_i))1_{\Omega}(H(x_i))}
\leq \frac{D^2\abs{\tbdy}}{n^2}\label{var2}.
\end{align}
Recall that $\partial\Omega$ admits ($d-1$)-dimensional Minkowski content. It follows from Definition \ref{defnMinkowski} that
\begin{equation*}
  \mathcal{M}(\partial \Omega) = \lim_{\epsilon\downarrow 0}\frac{\lambda_d((\partial \Omega)_\epsilon)}{2\epsilon}<\infty.
\end{equation*}
Thus for any fixed $\delta>2$, there exists $\epsilon_0>0$ such that $\lambda_d((\partial \Omega)_\epsilon)<\delta\mathcal{M}(\partial \Omega)\epsilon$ whenever $\epsilon<\epsilon_0$. We can assume that $n>(2\sqrt{d+3}/\epsilon_0)^{d}$. Then $r_i\leq \varepsilon<\epsilon_0$. Notice that $\bigcup_{i\in\tbdy}E_i\subset(\partial \Omega)_\varepsilon$. We thus arrive at $$
\abs{\tbdy}\leq\frac{\lambda_d((\partial \Omega)_\varepsilon)}{\lambda_d(E_i)}\leq \frac{\delta\mathcal{M}(\partial \Omega)\varepsilon}{n^{-1}}= 2\sqrt{d+3}\delta\mathcal{M}(\partial \Omega)n^{1-1/d}.
$$
Now by (\ref{var2}), we have $\var{\hat{\mu}_{\mathrm{bdy}}}=O(n^{-1-1/d})$. 
Finally, from $\var{\hat{\mu}(\Delta)}\leq 
(\sqrt{\var{\hat{\mu}_{\mathrm{int}}}}+\sqrt{\var{\hat{\mu}_{\mathrm{bdy}}}})^2$, 
we obtain $\var{\hat{\mu}(\Delta)}=O(n^{-1-1/d})$.
\end{proof}

\begin{rem}\label{rem1}
If $\Omega$ is a convex set, then it is easy to see that $\partial \Omega$ admits ($d-1$)-dimensional Minkowski content.
Moreover, $\mathcal{M}(\partial \Omega)\leq 2d$ as the outer surface area of a convex set in $[0,1]^d$
is bounded by the surface area of the unit cube $[0,1]^d$, which is $2d$.
Generally, \cite{Ambrosio2008} show that if $\Omega$ has Lipschitz boundary, then $\partial \Omega$ admits ($d-1$)-dimensional Minkowski content. In their terminology, a set $\Omega$ is said to have Lipschitz boundary if for every boundary point $a$ there exists a neighborhood $A$ of $a$, a rotation $R$ in $\R^d$ and a Lipschitz function $f: \R^{d-1}\to \R$ such that $R(\Omega\cap A)=\set{(x,y)\in (\R^{d-1}\times \R)\cap R(A)|y\geq f(x)}$. In other words, $\Omega\cap A$ is the epigraph of a Lipschitz function.
\end{rem}

\begin{rem}
The convergence rate \eqref{boundofdiscfun}
for $f(x) = g(x)1_\Omega(x)$ extends to functions
$f(x) = g_0(x) + \sum_{j=1}^Jg_j(x)1_{\Omega_j}(x)$
where all of the $g_j$ are Lipschitz continuous and all of the $\Omega_j$
have boundaries with finite Minkowski content.
\end{rem}

\subsection{Randomized van der Corput Sequence}
For the van der Corput sequence, we apply  the nested uniform digit scrambling of \cite{owen1995}. Let $a_1,\dots,a_n$ be the first $n$ points of van der Corput sequence in base $b$. We may write $a_i$ in base $b$ expansion $a_i = \sum_{j=1}^{\infty} a_{ij}b^{-j},$ where $0\leq a_{ij}<b$ for all $i,j$. The scrambled version of $a_1,\dots,a_n$ is a sequence $x_1,\dots,x_n$ written as $x_i = \sum_{j=1}^{\infty}x_{ij}b^{-j},$ where $x_{ij}$ are defined in terms of random permutations of the $a_{ij}$. The permutation applied to $a_{ij}$ depends on the values of $a_{ih}$ for $h<j$. Specifically $x_{i1} = \pi(a_{i1}),\ x_{i2} = \pi_{a_{i1}}(a_{i2}),\ x_{i3} = \pi_{a_{i1}a_{i2}}(a_{i3})$, and generally
$$
  x_{ij} = \pi_{a_{i1}a_{i2}\cdots  a_{ij-1} }(a_{ij}).
$$
Each permutation $\pi_\bullet$ is uniformly distributed over the $b!$ permutations of $\set{0,1,\dots,b-1}$, and the permutations are mutually independent. Let $\Pi$ be the collection of all the permutations involved in the scrambling scheme. The randomized version of (\ref{estimator}) becomes
\begin{equation}\label{scrambledEst}
  \hat{\mu}(\Pi) = \frac{1}{n}\sum_{i=1}^nf(H(x_i)).
\end{equation}
\cite{owen1995} shows that each $x_i$ is uniformly distributed on $[0,1]$. Thus the estimate (\ref{scrambledEst}) is unbiased. Moreover, if $n=b^m$ for some nonnegative $m$, then we can reorder the data values in scrambled sequence such that $x_i\sim \Unif{[\frac{i-1}{n},\frac{i}{n}]}$ independently for $i = 1,\dots,b^m$.
In this case, the scrambled van der Corput sequence is the same as the randomized lattice sequence. Thus the estimate (\ref{scrambledEst}) has the same variance shown in Theorem \ref{ThmLatticeVariance}. For an arbitrary sample size $n$, we can find the associated rates by exploiting the properties of van der Corput sequences.
\begin{thm}\label{ThmvanderCorputVariance}
The estimate $\hat\mu(\Pi)$ of (\ref{scrambledEst}) is unbiased for any $f\in L^2([0,1]^d)$. If $f$ is Lipschitz continuous, then
\begin{equation}\label{boundofsmoothfun2}
  \var{\hat{\mu}(\Pi)}=
\begin{cases}
O(n^{-1-2/d}), & d\ge 3\\
O(n^{-2}\log(n)^2), & d= 2\\
O(n^{-2}), & d= 1.
\end{cases}
\end{equation}
If $f(x)=g(x)1_{\Omega}(x)$ where $g(x)$ is Lipschitz continuous and $\partial\Omega$ admits ($d-1$)-dimensional Minkowski content, then
\begin{equation}\label{boundofdiscfun2}
  \var{\hat{\mu}(\Pi)}=
\begin{cases}
O(n^{-1-1/d}), & d\ge2\\
O(n^{-2}\log(n)^2), & d=1.
\end{cases}
\end{equation}
\end{thm}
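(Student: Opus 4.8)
The plan is to reduce the arbitrary-$n$ case to the power-of-$b$ case already settled by Theorem~\ref{ThmLatticeVariance}, using the base-$b$ block decomposition from the proof of Theorem~\ref{ThmDiscrepancyVDC}. Unbiasedness is immediate: nested uniform scrambling makes each $x_i\sim\Unif{[0,1]}$, so each $f(H(x_i))$ has mean $\mu$ by the change of variables~\eqref{changeofvariables}, and hence $\E{\hat\mu(\Pi)}=\mu$. For the variance, write $n=\sum_{j=0}^k a_jb^j$ with $a_j\in\{0,\dots,b-1\}$ and $a_k>0$, and partition $x_1,\dots,x_n$ into blocks $\cx_{j\ell}$ of size $b^j$, $\ell=1,\dots,a_j$, exactly as in Theorem~\ref{ThmDiscrepancyVDC}. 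The crucial observation is that, within each block, the nested scramble acts as a nested scramble of a $(0,j,1)$-net: the $b^j$ points of $\cx_{j\ell}$ land one in each interval $[(l-1)/b^j,l/b^j)$, are uniformly distributed therein, and are mutually independent. Thus each block is, up to relabeling, a randomized lattice sequence on $b^j$ points, so the block average $\hat\mu_{j\ell}=b^{-j}\sum_{i\in\cx_{j\ell}}f(H(x_i))$ is unbiased for $\mu$ and obeys Theorem~\ref{ThmLatticeVariance}: $\var{\hat\mu_{j\ell}}=O(b^{-j(1+2/d)})$ in the Lipschitz case and $O(b^{-j(1+1/d)})$ in the discontinuous case.

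Next I would combine the blocks. Writing $\hat\mu(\Pi)=n^{-1}\sum_{j=0}^k\sum_{\ell=1}^{a_j}b^j\hat\mu_{j\ell}$ and applying Minkowski's inequality to the centered variables $b^j(\hat\mu_{j\ell}-\mu)/n$ gives
\[
\sqrt{\var{\hat\mu(\Pi)}}\le\frac1n\sum_{j=0}^k\sum_{\ell=1}^{a_j}b^j\sqrt{\var{\hat\mu_{j\ell}}}.
\]
Here one must resist assuming the blocks are independent, since the scramble couples them; the triangle inequality for the $L^2$ norm is what is available, and it is precisely this that produces the extra logarithm at the threshold dimension. Substituting the per-block bounds, the Lipschitz case is controlled by $n^{-1}\sum_{j=0}^k a_jb^{j(1/2-1/d)}$ and the discontinuous case by $n^{-1}\sum_{j=0}^k a_jb^{j(1/2-1/(2d))}$.

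Finally I would evaluate these geometric sums according to the sign of the exponent, using $a_j\le b-1$ and $b^k\le n$. In the Lipschitz case the exponent $1/2-1/d$ is positive for $d\ge3$ (sum dominated by its top term $\asymp n^{1/2-1/d}$, giving $\var{\hat\mu(\Pi)}=O(n^{-1-2/d})$), zero for $d=2$ (sum $\asymp\log n$, giving $O(n^{-2}\log(n)^2)$), and negative for $d=1$ (sum $O(1)$, giving $O(n^{-2})$). In the discontinuous case the exponent $1/2-1/(2d)$ is positive for all $d\ge2$ (giving $\var{\hat\mu(\Pi)}=O(n^{-1-1/d})$) and zero for $d=1$ (giving $O(n^{-2}\log(n)^2)$), matching \eqref{boundofsmoothfun2} and \eqref{boundofdiscfun2}. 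The main obstacle is the first step: rigorously verifying that the restriction of the nested scramble to each size-$b^j$ block is genuinely a randomized lattice sequence---that the one-point-per-stratum structure, within-stratum uniformity, and within-block independence all survive the blocking---since the rest is the geometric bookkeeping already rehearsed in Theorem~\ref{ThmDiscrepancyVDC}.
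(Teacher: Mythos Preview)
Your proposal is correct and is essentially the paper's own argument: the same base-$b$ block decomposition of $n$, the same invocation of Theorem~\ref{ThmLatticeVariance} on each block, the same Minkowski/$|\mathrm{Corr}|\le 1$ inequality to combine blocks without assuming independence between them, and the same trichotomy on the geometric sum. The ``main obstacle'' you flag---that each size-$b^j$ block really behaves as a randomized lattice sample---is exactly what the paper asserts just before stating the theorem (for $n=b^m$ the scrambled van der Corput points are, after reordering, independent $\mathrm{Unif}[(i-1)/n,i/n]$), and the extension to the sub-blocks follows because each block is a scrambled $(0,j,1)$-net whose within-stratum positions are governed by distinct, hence independent, permutations.
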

\begin{proof}
As in the proof of Theorem~\ref{ThmDiscrepancyVDC} we may
write $n = \sum_{j=0}^ka_jb^j$ with $a_j\in\{0,1,\dots,b-1\}$
where $a_k>0$, and split the points into $\sum_{j=0}^ka_j$
non-overlapping randomized van der Corput sequences,
of which $a_j$ have sample size $b^j$.
Theorem~\ref{ThmLatticeVariance} gives variance bounds
of the form $Cn^{-1-\alpha}$ where $\alpha=2/d$
when $f$ is Lipschitz continuous and $\alpha=1/d$
when $f(x)=g(x)1_\Omega(x)$ for Lipschitz continuous
$g$ and $\partial\Omega$ with bounded Minkowski content.

In either case, 
write $n\hat\mu(\Pi) = \sum_{j=0}^k\sum_{\ell=1}^{a_j}b^j\hat\mu_{j\ell}$, for
$\hat\mu_{j\ell} = \hat\mu_{j\ell}(\Pi)$. Then an elementary inequality
based on $|\corr{\hat\mu_{j\ell},\hat\mu_{j'\ell'}}|\le1$ yields
\begin{align*}
\var{\hat\mu(\Pi)}^{1/2} &
\le C^{1/2}\sum_{j=0}^k\sum_{\ell=0}^{a_j} \frac{b^{j}}{n}\var{\hat\mu_{j\ell}}^{1/2}
\le \frac{(b-1)C^{1/2}}n\sum_{j=0}^kb^{j(1-\alpha)/2}.
\end{align*}
Now for $\alpha<1$
$$\sum_{j=0}^kb^{j(1-\alpha)/2}
\le \frac{b^{k(1-\alpha)/2}}{1-b^{(\alpha-1)/2}}
\le \frac{n^{(1-\alpha)/2}}{1-b^{(\alpha-1)/2}}
$$
using $b^{k}\le n$.
Then 
$\var{\hat\mu(\Pi)} = O( n^{-1-\alpha})$.
That is, for $\alpha<1$, the van der Corput
construction inherits the rate of the stratified one.

For $\alpha=1$
we have $\sum_{j=0}^kb^{j(1-\alpha)/2}= k+1=O(\log(n))$
and then $\var{\hat\mu(\Pi)} = O( n^{-2}\log(n)^2 )$.
For $\alpha>1$
we have $\sum_{j=0}^kb^{j(1-\alpha)/2}=O(1)$
and then $\var{\hat\mu(\Pi)} = O( n^{-2})$.

Equation~\eqref{boundofsmoothfun2} now follows because
$\alpha<1$ for $d\ge3$, $\alpha=1$ for $d=2$ and $\alpha>1$
for $d=1$.  Similarly,~\eqref {boundofdiscfun2} follows because
$\alpha<1$ for $d\ge2$ and $\alpha=1$ for $d=1$.
\end{proof}

\begin{figure*}
  \centering
\includegraphics[width=\hsize]{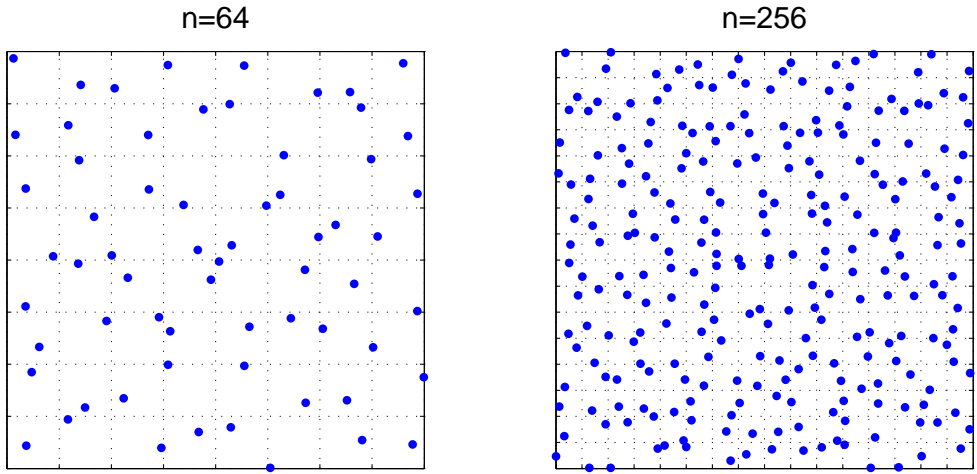}
\caption{
Hilbert mappings into $[0,1]^2$ of $n$ scrambled van der Corput points
in base $2$, for $n\in\{64,256\}$.
}
\label{scrambledvanderCorput}
\end{figure*}

Scrambled net quadrature has a mean squared error of
$O(n^{-3}\log(n)^{d-1})$ for integrands whose mixed
partial derivative taken once with respect to all components
of $x$ is in $L^2[0,1]^d$ \citep{smoovar,localanti}.
The rate in \eqref{boundofsmoothfun2} for $d=1$ is not
as good as that rate even though the algorithms match
in this case.  The explanation is that Lipschitz continuity
is a weaker condition than having the mixed partial in $L^2$.

\subsection{Adaptive sampling}

Integration of discontinuous functions is an important challenge
because there are few good solutions for them.
From the proof of Theorem \ref{ThmLatticeVariance},
we see that intervals of $[0,1]$ in which $f\circ H$ is discontinuous
contribute $O(n^{-1-1/d})$ to the variance, while the other
intervals contribute only  $O(n^{-1-2/d})$. This suggests that we might
improve matters by oversampling the intervals of discontinuity.
In that proof $\tbdy$ collects the indices of $E_i$ touching the boundary of the discontinuity, $\tint$ collects those with $E_i\subset\Omega$ and the ones
contained in $\Omega^c$ don't contribute to the error.
Let us write the estimated integral of the discontinuous function 
$f(x)=g(x)1_{\Omega}(x)$ as
\begin{equation}\label{original}
  \hat{\mu} = \frac{1}{n}\sum_{i\in \tbdy}f(H(x_i))+\frac{1}{n}\sum_{i\not\in\tbdy}f(H(x_i)).
\end{equation}

Suppose we have prior  knowledge about the set $\tbdy$. We could then use
that knowledge to sample $n_0=\ceil{n/\abs{\tbdy}}$ times in each 
stratum $E_i$ for $i\in \tbdy$, and use one sample in the remaining strata as usual. 
From such samples we get the unbiased estimator
\begin{equation}\label{improvedest}
  \hat{\mu} = \frac{1}{nn_0}\sum_{j=1}^{n_0}\sum_{i\in \tbdy}f(H(x_i^{(j)}))+\frac{1}{n}\sum_{i\notin \tbdy}f(H(x_i)),
\end{equation}
where $x_i^{(j)}, x_i\sim \Unif{[\frac{i-1}{n},\frac{i}{n}]}$ independently.
The cost of the estimate (\ref{improvedest}) is at most two times the original estimate (\ref{original}) as it makes at most $2n$ function evaluations.
Roughly half of the evaluations are in the boundary strata.

\begin{thm}\label{thmImproved}
Suppose $f(x)=g(x)1_{\Omega}(x)$ satisfying the conditions of the second
part of Theorem \ref{ThmLatticeVariance}. Then the variance of $\hat{\mu}$ in (\ref{improvedest}) is $O(n^{-1-2/d})$.
\end{thm}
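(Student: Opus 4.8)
The plan is to split the estimator \eqref{improvedest} into its boundary and non-boundary pieces
$$
\hat\mu_{\mathrm{bdy}} = \frac1{nn_0}\sum_{j=1}^{n_0}\sum_{i\in\tbdy}f(H(x_i^{(j)})),
\qquad
\hat\mu_{\mathrm{rest}} = \frac1n\sum_{i\notin\tbdy}f(H(x_i)),
$$
and to bound the variance of each separately. Since every sampling variable $x_i^{(j)}$ and $x_i$ is drawn independently, and since $\tbdy$ is a fixed (non-random) index set supplied by our prior knowledge, the two pieces are functions of disjoint collections of independent variables. Hence $\var{\hat\mu}=\var{\hat\mu_{\mathrm{bdy}}}+\var{\hat\mu_{\mathrm{rest}}}$, and it suffices to show each summand is $O(n^{-1-2/d})$.

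For the non-boundary piece I would observe that each stratum $E_i$ with $i\notin\tbdy$ lies entirely in $\Omega$ or entirely in $\Omega^c$. On the strata inside $\Omega^c$ we have $f(H(x_i))=0$, contributing no variance, while on those inside $\Omega$ we have $f(H(x_i))=g(H(x_i))$ with $g$ Lipschitz. Thus $\hat\mu_{\mathrm{rest}}$ coincides with the interior estimator $\hat\mu_{\mathrm{int}}$ analyzed in the proof of Theorem~\ref{ThmLatticeVariance}, whose variance is already shown there to be $O(n^{-1-2/d})$.

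The boundary piece is where the oversampling pays off. For a fixed $i\in\tbdy$, the inner average $n_0^{-1}\sum_{j=1}^{n_0}f(H(x_i^{(j)}))$ is a mean of $n_0$ i.i.d.\ terms, so its variance is $n_0^{-1}$ times the single-sample variance, which is at most $D^2$ because $\abs{f}\le\abs{g}\le D$. Using independence across the $\abs{\tbdy}$ boundary strata,
$$
\var{\hat\mu_{\mathrm{bdy}}}\le \frac{D^2\,\abs{\tbdy}}{n^2 n_0}.
$$
Because $n_0=\ceil{n/\abs{\tbdy}}\ge n/\abs{\tbdy}$, we have $n_0^{-1}\le\abs{\tbdy}/n$, so the right-hand side is at most $D^2\abs{\tbdy}^2/n^3$. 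Invoking the Minkowski-content bound $\abs{\tbdy}=O(n^{1-1/d})$ established inside the proof of Theorem~\ref{ThmLatticeVariance} then gives $\var{\hat\mu_{\mathrm{bdy}}}=O(n^{2-2/d}/n^3)=O(n^{-1-2/d})$. Adding the two contributions yields the claim.

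This argument is mostly bookkeeping once the ingredients of Theorem~\ref{ThmLatticeVariance} are in hand; the one substantive step is the balancing choice $n_0\approx n/\abs{\tbdy}$, which trades the unadapted boundary variance $\abs{\tbdy}/n^2=O(n^{-1-1/d})$ for $\abs{\tbdy}^2/n^3=O(n^{-1-2/d})$, so that the boundary strata stop dominating the error. I expect no genuine difficulty, but two points deserve care: the ceiling in $n_0$ must be used in the favorable direction (it only enlarges $n_0$, hence only shrinks the variance bound), and one should confirm the resulting budget $n_0\abs{\tbdy}+(n-\abs{\tbdy})<2n$ is consistent with the stated cost. The real content is simply the observation, already exploited in the preceding theorems, that boundary strata are few, namely $O(n^{1-1/d})$, while interior strata enjoy the Lipschitz rate.
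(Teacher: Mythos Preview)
Your proof is correct and follows essentially the same route as the paper: you split $\hat\mu$ into boundary and non-boundary pieces, invoke the proof of Theorem~\ref{ThmLatticeVariance} for the $O(n^{-1-2/d})$ interior bound and the estimate $\abs{\tbdy}=O(n^{1-1/d})$, and then bound the oversampled boundary variance by $D^2\abs{\tbdy}/(n^2n_0)=O(n^{-1-2/d})$ via $n_0\ge n/\abs{\tbdy}$. The only cosmetic difference is that you spell out the independence of the two pieces and the reduction of $\hat\mu_{\mathrm{rest}}$ to $\hat\mu_{\mathrm{int}}$ more explicitly than the paper does.
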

\begin{proof}
From the proof of Theorem \ref{ThmLatticeVariance}, we have $\abs{\tbdy}=O(n^{1-1/d})$ and
$$\bvar{\frac{1}{n}\sum_{i\notin \tbdy}f(H(x_i))}=O(n^{-1-2/d}).$$ 
It remains to bound the variance of first term in the right side of (\ref{improvedest}). 
Similarly to (\ref{var2}), we find that
\begin{align*}
  \bvar{\frac{1}{nn_0}\sum_{j=1}^{n_0}\sum_{i\in \tbdy}f(H(x_i^{(j)}))} &= \frac{1}{n^2n_0^2}\sum_{j=1}^{n_0}\sum_{i\in \tbdy}\var{f(H(x_i^{(j)}))}\\
&=O(n^{-2}n_0^{-1}\abs{\tbdy})=O(n^{-1-2/d}),
\end{align*}
which completes this proof.
\end{proof}

In practice, we have no prior knowledge of $\tbdy$ and
so Theorem~\ref{thmImproved} describes an unusable method.
It does however suggest the possibility of 
adaptive algorithms that both discover and exploit
the presence of boundary intervals.

\section{Numerical Study}\label{experiment}
\subsection{Computational Issue}
In this section we use the image under $H$ of scrambled  van der Corput sampling points on some test integrands with known integrals and compare our observed mean squared errors to the theoretical rates.
We chose the van der Corput points in base $2$
because it is extensible and is easily expressed in base $2$ which conveniently
matches the base used to define the Hilbert curve.
The first step is to randomize the van der Corput sequence using the scrambling scheme of \cite{owen1995}. In the next step, we use the algorithm given by \cite{Butz1971} for mapping the one-dimensional sequence to a $d$-dimensional sequence. 
Butz' algorithm is iterative, requiring a number of iterations equal to the order of the curve, say, $m$. The accuracy of the approximation of each coordinate is $2^{-m}$. 

Using Butz' iteration turns an algorithm with $n$ function values into
one that costs $O(n\log(n))$.
For practical computation, $2^{-m}$ is set to the machine precision, e.g., $m=31$ in our numerical examples, thus the effect is negligible.

Suppose we are going to map a point $x$ in $[0,1]$ to $d$-dimensional point $P$ in $[0,1]^d$, and suppose $x$ is expressed as an $md$-bit binary number:
\begin{equation*}
  x = 0._2\rho_1^1\rho_2^1\cdots\rho_d^1\rho_1^2\rho_2^2\cdots\rho_d^2\rho_1^m\rho_2^m\cdots\rho_d^m.
\end{equation*}
Define $\rho^i=0._2\rho_1^i\rho_2^i\cdots\rho_d^i$. In Butz' algorithm, $\rho^i$ is transformed to $\alpha^i=0._2\alpha_1^i\alpha_2^i\cdots\alpha_d^i$ via some logical operations. See \cite{Butz1971} for details.
The coordinates $p_j$ of $P$ are then given by
\begin{equation*}
  p_j = 0._2\alpha^1_j\alpha^2_j\cdots\alpha^m_j,
\end{equation*}
for $j=1,\dots,d$. To effect this algorithm, one needs to scramble the first $md$ digits of the points in the van der Corput sequence. Suppose that the sample size is $n=2^k$ for integer $k\ge 0$ with $k<md$. At the scrambling stage, we just need to store $n-1$ permutations to scramble the first $k$ digits. The remaining $md-k$ digits are randomly and independently chosen from $\set{0,1}$. When $md\gg k$, the storage requirement of scrambled van der Corput points is much less than that of scrambling a $d$-dimensional digital in base $2$. Note that the Hilbert computations are very fast since they are based on logical operations. 

\subsection{Examples}
We use three integrands of different smoothness to assess the convergence of our quadrature methods:
\begin{itemize}
  \item Smooth function: $f_1(X)= \sum_{i=1}^dX_i$;
  \item Function with cusp: $f_2(X)= \max(\sum_{i=1}^dX_i-\frac{d}{2},0)$;
  \item Discontinuous function: $f_3(X)= 1_{\set{\sum_{i=1}^dX_i>\frac{d}{2}}}(X)$.
\end{itemize}
Note that $f_1$ and $f_2$ are Lipschitz continuous. From Theorem \ref{ThmvanderCorputVariance}, the theoretical rate of mean squared error
for these two function is $O(n^{-1-2/d})$.
The corresponding rate for $f_3$ is $O(n^{-1-1/d})$ as its discontinuity boundary is 
has finite Minkowski content. Figure \ref{fig1} shows the convergence graphs for $d=2,3,8,16$. These results support the theoretical rates shown in Theorem \ref{ThmvanderCorputVariance}.

\begin{figure}[t]
  \centering
\includegraphics[width=\hsize]{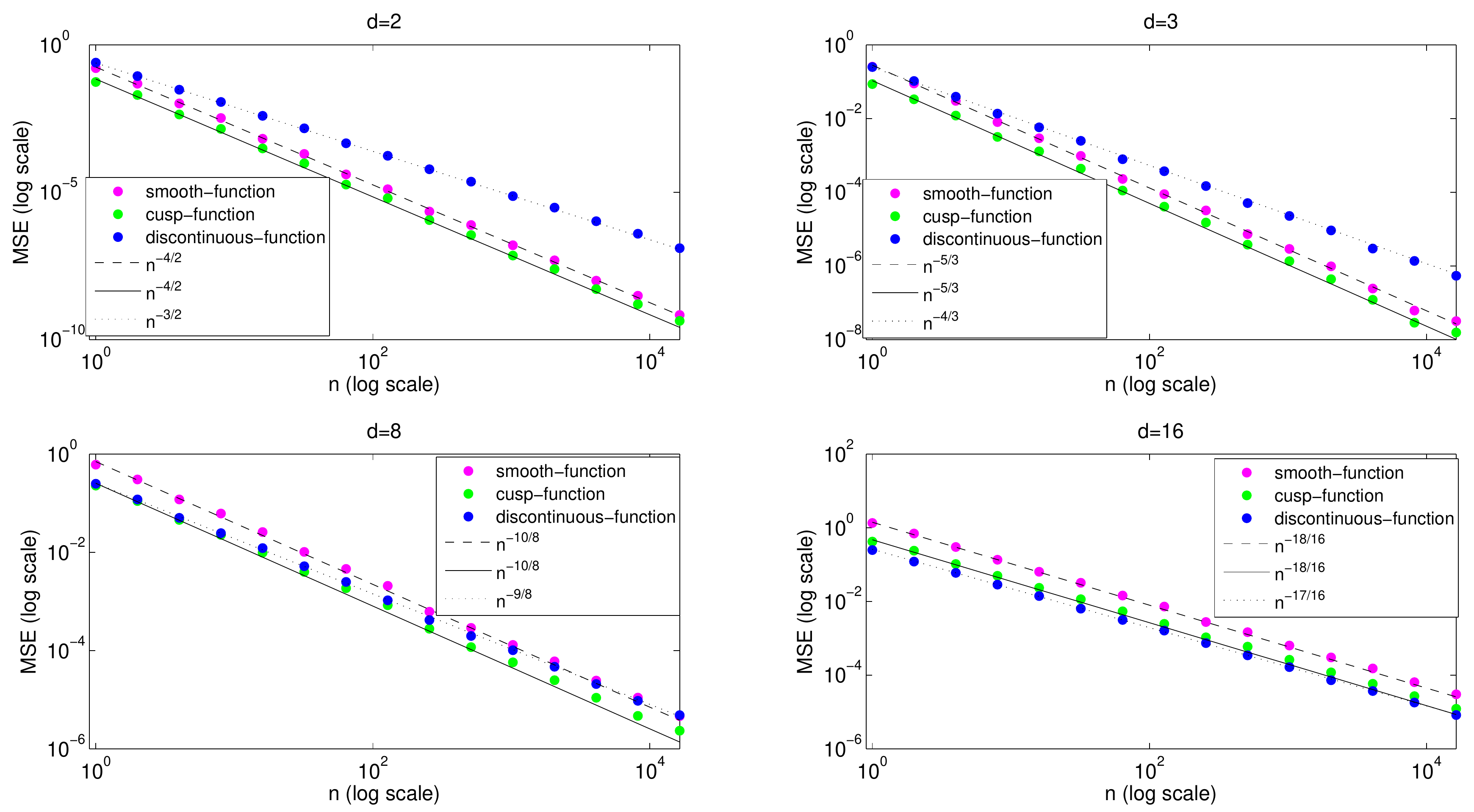}
  \caption{
MSE versus $n$ for functions 
$f_1$ (smooth)  $f_2$ (cusp) and $f_3$ (discontinuous), 
in dimensions $d=2,3,8,16$. The sample points are first $n$
van der Corput points (scrambled) 
for $n=2^k,k=0,\dots,14$. 
The reference lines are proportional to labeled rates, which reflect the theoretical rates. 
The MSEs are calculated based on $1000$ repetitions.
}\label{fig1}
\end{figure}

\section{Discussion}\label{conclusion}

In this paper, we study a quadrature method combining the
one-dimensional QMC points with the Hilbert curve in dimension $d$. We
find that the star-discrepancy has a very poor convergence rate
$O(n^{-1/d})$ in $d$ dimensions, which is the rate one would attain
by sampling on an $m^d$ grid.
Although this rate seems slow, 
deterministic quadrature for Lipschitz functions, using $n=m^d$
points in $[0,1]^d$  has an error rate of $O(n^{-1/d})$
(and a lower bound at that rate) according to \cite{sukharev1979optimal} as 
reported in \cite{novak1988deterministic}.
See also \cite{sobo:1989}. When $f$ has bounded variation on $[0,1]^d$,
then the Hilbert mapping of a low discrepancy point set in $[0,1]$
attains the optimal rate.

Randomized van der Corput sampling has a mean squared
error of $O(n^{-1-2/d})$ for Lipschitz continuous functions.
This is the same rate seen for samples of size $n=m^d$
in the stratified sampling method of 
\cite{dupa:1956} and \cite{haber:1966}, which takes one
or more points independently in each of $m^d$ congruent
subcubes of $[0,1]^d$.
Compared to $O(n^{-1/d})$, this rate
reflects the widely seen error reduction by $O(n^{-1/2})$
commonly seen in the randomized setting versus worst
case settings.

Both deterministic and randomized versions of this Hilbert
space sampling match the rates seen on grids, without requiring
$n$ to be of the form $m^d$. This is why we think of the
Hilbert mapping of van der Corput sequences as extensible
grids.

The figures in~\cite{gerb:chop:2014} show a decreasing
rate improvement over Monte Carlo  as the dimension
of their examples increases.
Our results do not yield a convergence rate for their
algorithm. They use
the inverse $h_m:[0,1]^d\to[0,1]$ of the Hilbert function $H_m$,
in addition to $H_m$.
The function $H$ is not invertible as there is a set
of measure $0$ in $[0,1]^d$ whose points have more
than one pre-image in $[0,1]$. 
For large $m$, the function $h_m$ is very non-smooth and has enormous
variation because nearby points in $[0,1]^d$ can arise
as the images under $H_m$ of widely separated points in $[0,1]$.

Our main theorems \ref{ThmDiscrepancy}, \ref{ThmDiscrepancyVDC},
\ref{ThmLatticeVariance}, and \ref{ThmvanderCorputVariance}
on star discrepancy and sampling variance are not strongly tied
to the Hilbert space-filling curve. The space-filling curves of Peano
and Sierpinski also satisfy the H\"older inequality with exponent $1/d$
that we based our arguments on, although with a different constant.
As a result, the same rates of convergence hold for stratified and
van der Corput sampling along these curves.
The Lebesgue space-filling curve, also called the $Z$ curve, differs
from the aforementioned curves in that it is differentiable almost
everywhere.  It also satisfies H\"older continuity, but the exponent
is $\log(2)/(d\log(3))$ which is  worse than $1/d$ that holds
for the other curves.  Using the Lebesgue curve is roughly
like multiplying the dimension by $\log_2(3)$, compared to using
the Hilbert curve.
See~\citet[Chapter 4]{zumbusch2003parallel} for these properties
of space-filling curves.

\section*{Acknowledgments}

We thank Erich Novak for helpful comments.
ABO was supported by the US NSF under grant DMS-0906056.
ZH was supported by a PhD Short-Term Visiting abroad Scholarship of Tsinghua University.

\bibliographystyle{apalike}
\bibliography{myBiBLibray}

\begin{thebibliography}{}

\bibitem[Ambrosio et~al., 2008]{Ambrosio2008}
Ambrosio, L., Colesanti, A., and Villa, E. (2008).
\newblock Outer {M}inkowski content for some classes of closed sets.
\newblock {\em Mathematische Annalen}, 342(4):727--748.

\bibitem[Bader, 2013]{bader2013space}
Bader, M. (2013).
\newblock {\em Space-filling curves: an introduction with applications in
  scientific computing}, volume~9.
\newblock Springer, Berlin.

\bibitem[Butz, 1971]{Butz1971}
Butz, A.~R. (1971).
\newblock Alternative algorithm for {H}ilbert's space-filling curve.
\newblock {\em IEEE Transactions on Computers}, 20(4):424--426.

\bibitem[Dupach, 1956]{dupa:1956}
Dupach, V. (1956).
\newblock Stockasticke pocetni metodi.
\newblock {\em Casopis pro pestov\'ani matematiky}, 81(1):55--68.

\bibitem[Gerber and Chopin, 2014]{gerb:chop:2014}
Gerber, M. and Chopin, N. (2014).
\newblock Sequential quasi-{M}onte {C}arlo.
\newblock {\em arXiv preprint arXiv:1402.4039}.

\bibitem[Haber, 1966]{haber:1966}
Haber, S. (1966).
\newblock A modified {M}onte-{C}arlo quadrature.
\newblock {\em Mathematics of Computation}, 20(95):361--368.

\bibitem[L'Ecuyer et~al., 2008]{lecu:leco:tuff:2006}
L'Ecuyer, P., L{\'e}cot, C., and Tuffin, B. (2008).
\newblock A randomized quasi-{M}onte {C}arlo simulation method for {M}arkov
  chains.
\newblock {\em Operations Research}, 56(4):958--975.

\bibitem[McKay et~al., 1979]{mcka:beck:cono:1979}
McKay, M.~D., Beckman, R.~J., and Conover, W.~J. (1979).
\newblock A comparison of three methods for selecting values of input variables
  in the analysis of output from a computer code.
\newblock {\em Technometrics}, 21(2):239--245.

\bibitem[Netto, 1879]{netto1879beitrag}
Netto, E. (1879).
\newblock Beitrag zur mannigfaltigkeitslehre.
\newblock {\em Crelle J}, 86:263--268.

\bibitem[Niederreiter, 1992]{nied:1992}
Niederreiter, H. (1992).
\newblock {\em Random Number Generation and Quasi-{Monte Carlo} Methods}.
\newblock SIAM, Philadelphia, PA.

\bibitem[Novak, 1988]{novak1988deterministic}
Novak, E. (1988).
\newblock {\em Deterministic and stochastic error bounds in numerical
  analysis}, volume 1349.
\newblock Springer-Verlag, Berlin.

\bibitem[Owen, 1995]{owen1995}
Owen, A.~B. (1995).
\newblock Randomly permuted (t, m, s)-nets and (t, s)-sequences.
\newblock In Niederreiter, H. and Shiue, P. J.-S., editors, {\em Monte Carlo
  and Quasi-Monte Carlo Methods in Scientific Computing}, pages 299--317.
  Springer.

\bibitem[Owen, 1997]{smoovar}
Owen, A.~B. (1997).
\newblock Scrambled net variance for integrals of smooth functions.
\newblock {\em Annals of Statistics}, 25(4):1541--1562.

\bibitem[Owen, 2005]{variation}
Owen, A.~B. (2005).
\newblock Multidimensional variation for quasi-{Monte Carlo}.
\newblock In Fan, J. and Li, G., editors, {\em International Conference on
  Statistics in honour of Professor Kai-Tai Fang's 65th birthday}.

\bibitem[Owen, 2008]{localanti}
Owen, A.~B. (2008).
\newblock Local antithetic sampling with scrambled nets.
\newblock {\em Annals of Statistics}, 36(5):2319--2343.

\bibitem[Rafaj{\l}owicz and Skubalska-Rafaj{\l}owicz,
  2008]{rafajlowicz2008equidistributed}
Rafaj{\l}owicz, E. and Skubalska-Rafaj{\l}owicz, E. (2008).
\newblock Equidistributed sequences along space-filling curves in sampling of
  images.
\newblock In {\em 16th European signal processing conference}, pages 25--28.

\bibitem[Sagan, 1994]{saga:1994}
Sagan, H. (1994).
\newblock {\em Space-filling curves}, volume~18.
\newblock Springer-Verlag New York.

\bibitem[Schretter and Niederreiter, 2013]{schretter2013direct}
Schretter, C. and Niederreiter, H. (2013).
\newblock A direct inversion method for non-uniform quasi-random point
  sequences.
\newblock {\em Monte Carlo Methods and Applications}, 19(1):1--9.

\bibitem[Sobol', 1989]{sobo:1989}
Sobol', I. (1989).
\newblock Quadrature formulae for functions of several variables satisfying a
  general {Lipschitz} condition.
\newblock {\em USSR Computational Mathematics and Mathematical Physics},
  29(3):201--206.

\bibitem[Sukharev, 1979]{sukharev1979optimal}
Sukharev, A.~G. (1979).
\newblock Optimal numerical integration formulas for some classes of functions
  of several variables.
\newblock {\em Soviet Math. Doklady}, 20:472--475.

\bibitem[van~der Corput, 1935]{corput1935}
van~der Corput, J.~G. (1935).
\newblock Verteilungsfunktionen {I}.
\newblock {\em Nederl. Akad. Wetensch. Proc.}, 38:813--821.

\bibitem[Zumbusch, 2003]{zumbusch2003parallel}
Zumbusch, G. (2003).
\newblock {\em Parallel multilevel methods}.
\newblock B. G. Teubner, Wiesbaden.

\end{thebibliography}
\end{document}